\newtheorem{theorem}{Theorem}
\newtheorem{corollary}{Corollary}
\newtheorem{definition}{Definition}
\newenvironment{proof}[1][Proof]{\emph{#1.} }{\  \hfill $\square $ \vspace{5 pt}}
\tikzset{myptr/.style={decoration={markings,mark=at position 1 with %
       {\arrow[scale=2,>=stealth]{>}}},postaction={decorate}}}
\newcommand*\samethanks[1][\value{footnote}]{\footnotemark[#1]}
\DeclareFontFamily{T1}{calligra}{}
\DeclareFontShape{T1}{calligra}{m}{n}{<->s*[1.44]callig15}{}
\DeclareMathAlphabet\mathcalligra   {T1}{calligra} {m} {n}
\begin{document}

\title{Dynamic matching games: stationary equilibria under varying commitments\thanks{Thanks \ldots. We acknowledge financial support
from UNSL through grants 032016 and 030320, from Consejo Nacional
de Investigaciones Cient\'{\i}ficas y T\'{e}cnicas (CONICET) through the grant
PIP 112-200801-00655, and from Agencia Nacional de Promoción Cient\'ifica y Tecnológica through grant PICT 2017-2355.}}


\author{Nadia Guiñazú\thanks{Instituto de Matem\'{a}tica Aplicada San Luis, Universidad Nacional de San
Luis and CONICET, San Luis, Argentina. Emails: \texttt{ncguinazu@unsl.edu.ar} (N. Guiñazú). \texttt{paneme@unsl.edu.ar} (P. Neme),  and \texttt{joviedo12@gmail.com} (J. Oviedo).}\and Pablo Neme\samethanks[2]  \and Jorge Oviedo} 

\date{\today}
\maketitle

\begin{abstract}

This paper examines equilibria in dynamic two-sided matching games, extending Gale and Shapley’s foundational model to a non-cooperative, decentralized, and dynamic framework. We focus on markets where agents have utility functions and commitments vary. Specifically, we analyze a dynamic matching game in which firms make offers to workers in each period, considering three types of commitment: (i) no commitment from either side, (ii) firms' commitment, and (iii) workers' commitment. Our results demonstrate that stable matchings can be supported as stationary equilibria under different commitment scenarios, depending on the strategies adopted by firms and workers. Furthermore, we identify key conditions, such as discount factors, that influence agents’ decisions to switch partners, thereby shaping equilibrium outcomes.

\bigskip

\noindent \emph{JEL classification:} C78, D47.\bigskip

\noindent \emph{Keywords:} Dynamic matching game, stationary equilibria, commitment, stable matchings 

\end{abstract}
\section{Introduction}

This paper extends the classical static matching model of \cite{gale1962college} to a dynamic, non-cooperative environment where firms and workers interact repeatedly. Our aim is to analyze how the structure of commitment between agents and their patience---both on the firm side and the worker side---affects the stability of the resulting matchings in decentralized markets. We consider a dynamic matching game where firms make offers to workers in each period, and workers individually decide whether to accept these offers or reject them. We present different scenarios of commitment and examine how these influence equilibrium strategies and the stability of the matching achieved over the long term. In particular, we explore the concept of \emph{subgame-perfect Nash equilibrium in stationary strategies}, which we refer to as \emph{stationary equilibrium}, within this dynamic framework.

Since the seminal work of \cite{gale1962college}, extensive literature has emerged on two-sided matching, primarily focusing on the notion of stability. A matching is deemed stable if all agents have acceptable partners and there exists no pair of agents---one from each side of the market---who would prefer to match with each other rather than remain with their current partners. Two-sided matching problems can be studied within both centralized and decentralized markets. In a centralized market, decisions and matchings are determined by a mechanism, such as the deferred acceptance algorithm, which is employed in systems like the National Resident Matching Program (NRMP). In this context, participants submit their preferences to a centralized entity that computes a stable matching based on specific rules.
In contrast, decentralized markets involve direct interaction among participants, who negotiate and reach agreements without a centralized intermediary. This raises critical questions regarding the ability of such markets to generate stable matchings. Our paper addresses these issues by studying equilibria in a dynamic and non-cooperative environment where firms and workers interact repeatedly. In each period, firms make offers to workers, who must individually decide which offer to accept.

A crucial aspect of dynamic models is the degree of commitment that agents can maintain with their partners over time. For instance, consider two possible scenarios: one where schools hire teachers, and another where sports teams hire athletes. Workers (teachers) may hold tenured positions in schools, implying that schools hold the commitment. Conversely, workers (athletes) may sign binding contracts with sports teams, implying that the commitment lies with the workers. Consequently, agents' actions may be influenced by the type of commitment assumed in the dynamic game.

Before delineating the game, it is essential to define what it means for an agent to be \emph{active} or \emph{inactive} in a given period. A worker is considered active if she does not hold a commitment with the agent she was matched with within the previous period or if she was unmatched in the previous period. Similarly, we define when a firm is active or inactive. 

The dynamic game analyzed in this paper is structured as a two-stage game for each period. In the first stage, firms simultaneously make offers to at most one worker. An active firm can extend an offer to any worker, while an inactive firm must retain its current employee. Firms remain unaware of the offers made by other firms in that period, but they are cognizant of the matching history from previous periods. Inactive firms are permitted to make offers to their current employees as well. 

In the second stage, each worker privately observes the offers received in the first stage, including the renewal offer from their current employer, if applicable. Workers do not know the offers made to others, but they are aware of the matchings from prior periods. Each worker can accept at most one offer and may choose to reject all offers. An active worker can accept any offer, while an inactive worker must accept the renewal offer from their current employer. 

Given the dynamic nature of the game and the assumption that agents do not hold beliefs about the actions of others, we employ the concept of stationary equilibrium as a solution concept. Recall that a strategy profile (one strategy for each agent) constitutes a stationary equilibrium if no possible deviation by an agent yields higher utility when considering the subgame involving only the active agents.

In dynamic games, the \emph{commitment} of agents plays a critical role. In this paper, we consider three scenarios: (i) when neither side of the market holds commitment, (ii) when only firms hold commitment, and (iii) when only workers hold commitment.

When both sides of the market possess commitment, the dynamic game reduces to a two-stage game where firms make offers, and workers either accept or reject them, concluding the game. This is due to the type of equilibrium we are studying, as in this context we do not entertain beliefs about the potential actions of other agents. Several studies have analyzed this setting, demonstrating that stable matchings can be supported by equilibria \citep[see][among others]{alcalde1997hiring,alcalde2000simple}.

In the case where only one side of the market holds commitment, we present two equilibria derived from two different strategies of firms—one more restrictive than the other. In the scenario where firms hold commitment (i.e., they offer tenured positions to workers), given a stable matching, if firms make offers to workers assigned under the stable matching and each worker accepts the offer yielding the highest utility, this strategy constitutes a stationary equilibrium that supports the stable matching.

Now, consider the first stage where firms make offers to the worker matched under the stable matching, and each worker accepts that offer. In this scenario, we allow firms to make new offers if they are active. Given a stable matching, if a worker seeks to improve her situation, she must resign from her current position with the firm she is assigned to under the stable matching and wait for a new offer. Due to the nature of firms' commitment, this worker must wait for a new offer, which incurs a cost referred to as the \emph{discount factor}. 
\cite{bonifacio2024counting} presents a re-stabilization process wherein a new and better offer always arrives for the worker who opts to resign. Additionally, they calculate the time required for the worker to receive this new offer. Once a worker decides to improve her situation and resigns from a firm, that firm becomes active. Since each active firm aims to maximize its utility, workers will only resign if the new offer yields higher utility, considering their discount factors; that is, their final utility is positive despite the discount factor. Thus, we can establish a threshold for discount factors such that if these factors exceed this threshold, it will not be worthwhile for workers to resign and await new offers. In this case, the proposed strategy constitutes a stationary equilibrium of the dynamic game.

In the scenario where workers possess commitment (i.e., workers sign binding contracts with firms), we demonstrate that, given a stable matching, two stationary equilibria support it, depending on the patience of firms in making offers. In this case, we also analyze two strategies for workers, one more restrictive than the other.
In the most restrictive case, similar to the scenario where firms hold commitment, we assume that the firms' strategy is always to offer to the worker employed under the stable matching. Workers accept the offer that provides them with the highest utility.
In the case of the less restrictive strategy, firms initially offer to the worker assigned under the stable matching. In subsequent stages, if firms observe that the stable matching was maintained in the previous period, they renew their offers to the same workers. Conversely, if firms observe that the stable matching was not formed in the prior period, they make offers to the workers employed under the firm-optimal stable matching. Both firm strategies, along with the workers' strategy of always accepting the offer that yields the highest utility, result in two stationary equilibria that support the same stable matching.


The closest paper to ours is that of \cite{diamantoudi2015decentralized}, who study a model in which firms make periodic offers to workers, who then decide whether to accept them. They also consider scenarios where agents hold commitment while entertaining beliefs about the behavior of other agents. They demonstrate that sequential equilibria in stationary strategies can lead to stable matchings. However, in the presence of commitment, some equilibria may yield unstable matchings, and the game does not always resolve in the first period, with certain matchings potentially materializing later. By imposing a consistency restriction on strategies, they prove that consistent outcomes are always stable matchings.

Similarly, \cite{haeringer2011decentralized} examine a dynamic game where payments depend solely on the final matching. In contrast to their approach, this paper considers a model in which matchings are formed in each period, and agents accumulate payoffs as the game progresses, highlighting the significance of both the duration of matchings and the final matching.

Other key works include those by \cite{konishi2008decentralized} and \cite{niederle2007matching}, who also study dynamic models by introducing factors such as wages and the duration of offers. However, these studies differ in that agents only observe the final matching, whereas our approach considers a continuous process of matching formation over time.

\cite{blum1997vacancy} develop an algorithm that identifies stable matchings when some agents are already paired. This provides an interesting perspective on how prior matchings can affect the dynamics of offers in repeated markets.

Finally, \cite{doval2022mechanism} introduces a revelation principle for single-agent dynamic mechanism design in Markov environments, where the agent's private information evolves stochastically and the designer commits only to short-term mechanisms. It demonstrates that all equilibrium payoffs can be achieved using flow-direct Blackwell mechanisms, allowing agents to truthfully report their types while aligning beliefs with the designer's equilibrium. This result simplifies the search for optimal mechanisms in dynamic settings, including dynamic Mirrlees and social insurance models.

The paper is structured as follows: In Section \ref{seccion de preliminares}, we present the preliminaries, the two-stage game, and the solution concept. Section \ref{seccion resultados} analyzes various commitment scenarios and their corresponding stationary equilibria. Finally, Section \ref{section concludings} offers concluding remarks.

\section{Preliminaries}\label{seccion de preliminares}

We consider two disjoint finite sets of agents, the set of firms $F$ and the set of workers $W$. An agent refers to either a firm or a worker. Each agent $i$ has a utility function $u_i$ such that
$$u_f:W \cup \{f\} \to \mathbb{R} \text{ for each } f \in F ,$$
$$u_w:F \cup \{w\}\to \mathbb{R} \text{ for each } w \in W.$$
We denote by $u_i(j)$  to agent $i$'s utility of being matched to agent $j$. Here, $u_i(i)$ is the agent’s utility of being unmatched, and we ask for utilities to fulfill that $u_i(i) =0$ for each $i$. If $u_i(j) \geq 0$ we say $j$ is acceptable to $i$. We assume ``strict'' utilities: $u_i(j) =u_i(k)$ only if $j =k$.

A matching is then a mapping $\mu: F\cup W \to F\cup W $ such that: (i) $\mu(f) \in W\cup \{f\}$ for each $f\in F$,  (ii) $\mu(w) \in F\cup \{w\}$ for each $w\in W$, and (iii) $\mu(i)= j$ if and only if $\mu(j)=i$ for each $i, j \in F\cup W $. Here, $\mu(i)$ denotes the agent with whom $i$ is matched and $\mu(i) =i$ means that agent $i$ is unmatched. Let $\mu_\emptyset$ denote the matching in which no one is matched.
A matching is individually rational if it asings to each agent $i$ a non-negative utility ($u_i(\mu(i))\geq 0$). We say that a matching  $\mu$ is blocked by a pair $(f, w) \in F\times W$ if
$$u_f (w) > u_ f (\mu( f )), \text{ and}$$
$$u_w( f ) > u_w(\mu(w)).$$
This means that $f$ and $w$  prefer each other to their partners under $\mu$. We say that a matching $\mu$ is stable if it is individually rational and has no blocking pair. The stable matching set is always non-empty. Moreover, there is always a stable matching that all firms agree to be the one that provides the highest utility, called the firm-optimal stable matching and denoted by $\mu_F$. Conversely, there is always a stable matching that all workers agree to be the one that provides the highest utility, called the worker-optimal stable matching, and denoted by $\mu_W$, \citep[see][]{gale1962college}.

In this paper, we consider a ``decentralized'' dynamic matching game. To formally define a dynamic matching game, we first define the payoff in each period. 

We consider discrete periods: $t=1, 2, 3,\ldots$. In each period, agents derive a payoff from the realized matching. The period-payoff function for agent $i$ is the utility function $u_i$ which is invariant over the periods. Each agent $i$ maximizes the discounted sum of period-payoffs,
$$\hat{u}_i=\sum_{t=1}^{\infty}\delta_i^{t-1}u_i(\mu^t(i)),$$
where $\delta_i \in (0, 1)$ is the discount factor and $\mu^t$ is the realized matching at period $t$. Now, we describe how $\mu^t$ is defined. In order to do so, note that the matching $\mu^{t-1}$ observed by each agent will determine, at the beginning of each period $t$, the set of firms and workers who cannot move in period $t$ depending on the type of commitment considered. We assume $\mu^0=\mu_\emptyset$, i.e., no one is matched before the initial period.

We denote by $F_c(\mu^{t-1})\subseteq F$ to the set of inactive firms at period $t$. These firms have committed themselves to their employees in $\mu^{t-1}$. This means that at period $t$, the firms in $F_c(\mu^{t-1})$ can neither fire their employees nor hire new ones. That is, their current employees have tenure and their jobs are protected. Conversely, we refer to the firms in $F \setminus F_c(\mu^{t-1})$ as active firms. These firms do not hold a commitment to their current employees, so they can fire their employees if they have any. Similarly, we denote by  $W_c(\mu^{t-1})\subseteq W$ the set of inactive workers at period $t$. This means that at period $t$, the workers in $W_c(\mu^{t-1})$ cannot switch their employers. Thus, we refer to the workers in  $W \setminus W_c(\mu^{t-1})$ as active workers. These workers do not hold a commitment to their employers and can leave their positions if employed.

Note that, depending on the characteristics of the sets $F_c$ and $W_c$ at a period $t$, there are three different scenarios to consider: 
\begin{description}
    \item[Case 1:] Two-sided commitment. All matched agents are inactive:
$$F_c(\mu^{t-1}) = \{f \in F : \mu^{t-1}( f ) \neq f \},\text{ and} $$
$$W_c(\mu^{t-1}) = \{w \in W : \mu^{t-1}(w) \neq w\}.$$
This means that once a firm and a worker are matched, they stay so permanently.
    \item[Case 2:] No commitment. All firms and workers are active regardless of the previous matching: $F_c(\mu^{t-1}) =W_c(\mu^{t-1}) =\emptyset.$ This means that firms can fire their employees, and workers can leave their current employers.

    \item[Case 3:] One-sided commitment. In this case, despite the market being symmetric, since only firms are making offers, we must analyze two subcases: when only firms hold commitment, and when only workers hold commitment. If only firms hold commitment, all matched firms are inactive, while all workers remain active:
$$F_c(\mu^{t-1}) = \{f \in F : \mu^{t-1}( f )\neq f \}, \text{ and}$$
$$W_c(\mu^{t-1})=\emptyset.$$
Workers cannot be fired but may switch to other firms when receiving new offers.
When only workers hold commitment, all matched workers are inactive, while all firms remain active:
$$F_c(\mu^{t-1}) =\emptyset, \text{ and}$$
$$W_c(\mu^{t-1})= \{w \in W : \mu^{t-1}( w )\neq w \}.$$
Workers must honor their employment contracts, even when receiving better offers. However, firms are allowed to lay off their workers in order to hire new ones, eventually.

\end{description}
 
Now, we are in a position to describe each period of the dynamic matching game. Each period will be decomposed into a two-stage game.
\begin{description}
     \item[First stage: ] Each firm simultaneously makes an offer to at most one worker. An active firm can make an offer to any worker while an inactive firm has no option but to keep its employee under $\mu^{t-1}$. Firms do not observe any offer made by another firm in the current period, but each firm observes the matching realized in previous periods. We consider that inactive firms make new offers to their current employees. The action of firm $f$, denoted by $o_f$, must fulfill that (i) $o_f \in W \cup \{ f \}$ if $f \in F \setminus F_c(\mu^{t-1})$ and (ii) $o_f = \mu^{t-1}( f )$ if $f \in  F_c(\mu^{t-1})$,
where $o_f=f$ means that $f$ makes no offer to any worker. Furthermore, we denote by  $O_f(\mu^{t-1})$ to the set of all possible offers that firm $f$ can make in period $t$ given $\mu^{t-1}$.
 \item[Second stage: ] Each worker $w$ privately observes the offers made to her in the first stage, denoted $O_w=\{f\in F: o_f=w \}$. Recall that $O_w$ includes the renewal offer from the current employer if $w$ has tenure. Workers do not observe any offer made to other workers in the current period, but each worker observes the entire matching realized in previous periods.  Thus, each worker simultaneously accepts at most one offer. An active worker $w$ can accept any offer or reject all offers. Inactive workers have no choice but to accept the renewal offers from their current employers. Thus, worker $w$'s response, denoted by $r_w$, must fulfill that (i) $r_w\in O_w \cup \{w\}$ if $w\in W\setminus W_c(\mu^{t-1})$ and (ii) $r_w = \mu^{t-1}(w)$ if $w\in W_c(\mu^{t-1}).$ Furthermore, we denote by $R_w(\mu^{t-1}, O_w)$ to the set of admissible responses for $w$.
\end{description}
Then, given the actions of firms and workers, the matching in period $t$, denoted by $\mu^t$, is defined by
$\mu^t (w) =r_w$ for each $w\in W$ and $\mu^t (f) =w$ if $f=r_w$. 
Thus, by definition, $\mu^t$ is individually rational at each period $t$. 
Therefore, a dynamic matching game is given by $(F,W,(u_i,\delta_i)_{i\in F\cup W}, F_c, W_c).$

Another two important concepts that we need to present used in our results are the histories and the strategies of the agents.
A history at the beginning of period $t$ is an ordered list of past actions, given by $$h_t =\left((o^{\tau}_f)_{f\in F},(r^\tau_w)_{w\in W}\right)_{\tau=1}^{t-1},$$
where $o^\tau_f$ is the offer made by firm $f$ in period $\tau=1, \ldots,  t-1$ and $r^\tau_w$ is the reply of worker $w$ in period $\tau$. After the first stage of period $t$, a history is given by $(h^t, (o^t_ f)_{f\in F})$, where $h^t$ is a history at the beginning of this period and $(o^t_f)_{f\in F}$ is the profile of offers made in this period.

The profile of replies $(r^\tau_w)_{w\in W}$ in $h^t$ contains the same information as the realized matching $\mu^t$, which becomes public information. Since offers are private information, players do not have complete information about the history. Each player observes only his private history, defined as follows.
A private history for firm $f$ in period $t$ is an ordered list $h^t_f= (\mu^0 = \mu_\emptyset, o^1_f, \mu^1, o^2_ f,\mu^2,\ldots, o^{t-1}_f ,\mu^{t-1})$, where $\mu^\tau$ is the matching realized in period $\tau$. While $\mu^\tau$ is public information, $o^\tau_f$ is private information. Let $H^t_f$ denote the set of private histories for $f$ in period $t$. Let $$H_f= \bigcup_{t=1}^{\infty}H^t_f$$ denote the set of all private histories for $f$.

A (pure) strategy of firm $f$ is a function $\sigma_f: H_f\to W\cup\{f\}$ such that for each $h^t_f \in H_f, \sigma_f(h^t_f)\in O_f(\mu^{t-1})$, where $\mu^{t-1}$ is the last entry of $h^t_f$.
Similarly, a private history for worker $w$ in the middle of period $t$ (when she makes a decision) is an ordered list
$h^t_w= (\mu^0=\mu_\emptyset, O^1_w, r^1_w,\mu^1, O^2_w, r^2_w,\mu^2,\ldots, O^{t-1}_w , r^{t-1}_w ,\mu^{t-1}, O^t_w)$,
where $O^\tau_w$ is the set of offers made to $w$ in period $\tau$ (including a renewal offer if any) and $r^\tau_w$ is her reply in that period. Let $H^t_w$ denote the set of all private histories for $w$ in period $t$. Let $$H_w=\bigcup_{t=1}^{\infty}H^t_w$$ denote the set of all private histories for $w$. A strategy of worker $w$ is then a function $\sigma_w: H_w\to F\cup\{w\}$ such that, for each $h^t_w\in H_w,$ $$\sigma_w(h^t_w)\in R_w(\mu^{t-1}, O^t_w),$$ where $\mu^{t-1}$ and $O^t_w$ are the last two entries of $h^t_w$.
A strategy profile $\sigma=(\sigma_i)_{i\in F\cup W}$ determines the payoff for each agent in the dynamic game. Given a dynamic matching game $(F,W,(u_i,\delta_i)_{i\in F\cup W}, F_c, W_c)$ and a strategy profile $\sigma$, $\mu_\sigma$ is the matching resulting of playing the dynamic matching game when agents declare the strategy profile $\sigma$, and  $\hat{u}_i(\mu_{\sigma})$ denotes the utility of each agent $i$. Moreover,  the utility for agent $i$ can be computed as $$\hat{u}_i(\mu_{\sigma})=\sum_{t=1}^{\infty} \delta^{t-1}_i u_i(\mu^t(i)).$$
 

In this paper, we assume that agents have no beliefs about the actions of other agents. Therefore, we consider \emph{subgame-perfect Nash equilibria in stationary strategies}, where each agent’s strategy depends only on the payoff-relevant state of the game.

\begin{definition}
A strategy profile $\sigma^\star$ is a Nash equilibrium if $u_i(\mu_{\sigma^\star})\geq u_i(\mu_{\sigma_i,\sigma^\star_{-i}})$  for each agent $i$ and each strategy $\sigma_i$ .\footnote{Here, $\mu_{\sigma_i,\sigma^\star_{-i}}$ denotes the matching resulting of playing the dynamic matching game when agent $i$ declare strategy $\sigma_i$ and the rest of agent $j$ declare strategy $\sigma^{\star}_j$.}
\end{definition}

Given a dynamic matching game,  a subgame that starts in time $\widetilde{t}$ is given by $$(F,W,(u_i,\delta_i)_{i\in F\cup W}, F_c, W_c)_{t=\widetilde{t}}^\infty.$$ Note that there are as many subgames starting at time $\widetilde{t}$ as there are histories at time $\widetilde{t}-1.$ A subgame is a part of the (original) dynamic matching game that starts at a point where the game's history up to that point is public knowledge among all agents and includes all subsequent decisions in the original game from that point onward.

\begin{definition}
    A strategy $\sigma^\star$ is a subgame-perfect Nash equilibrium if for each $\widetilde{t}$, $\sigma^\star$ is a Nash equilibrium for the subgame $(F,W,(u_i,\delta_i)_{i\in F\cup W}, F_c, W_c)_{t=\widetilde{t}}^\infty.$
\end{definition}

Since we do not consider the past histories of the game in this paper, we will employ stationary strategies to manage the dynamic problem more effectively. A crucial element in defining stationary strategies is the concept of continuation-equivalent games.

For an agent to decide which strategy to play at a given time, she will only consider the matching generated in the previous period. Therefore, if two different strategies result in equivalent matchings, the same strategy will be employed in the subsequent period in both cases.  Different matchings can induce distinct continuation-equivalent games depending on what commitment types agents possess. Given $\mu,\mu'\in \mathcal{M},$ the equivalence relation $\sim$ depends on the commitment structure of the game, and is defined as follows: 
\begin{itemize}
\item In the case of bilateral commitment, two matchings are continuation-equivalent if and only if the set of unmatched agents is identical. Formally, $\mu \sim \mu'$  if and only if $ \{ j \in F \cup W : \mu(j) = j \} = \{ j \in F \cup W : \mu'(j) = j \}$.  Agents who have been matched cannot change their partner for the rest of the game.
    \item In the no-commitment case, all matchings are continuation-equivalent. Formally, $\mu \sim \mu'$ for any two matchings $\mu$ and $\mu'$. In terms of commitment, the continuation of the game is the same regardless of what happened in the previous period.

    \item In the case of unilateral commitment, no two different matchings are continuation-equivalent. Formally, $\mu \sim \mu'$ if and only if $\mu= \mu'$. Even if the set of matched agents is the same, the continuation of the game depends on how the agents are currently matched.
\end{itemize}

Given a preference relation, we can define stationary strategies as follows. The strategy $\sigma_f$ of firm $f$ is stationary if for any pair of private histories $h_f = (\ldots, \mu)$ and $h'_f = (\ldots, \mu')$ (possibly with different lengths) we have that $\mu \sim \mu'$, then $\sigma_f(h_f) = \sigma_f(h'_f)$.
For workers' strategies, there is an additional requirement that the set of offers received in the current period is also identical. That is, the strategy $\sigma_w$ of worker $w$ is stationary if for any pair of private histories $h_w = (\ldots, \mu, O_w)$ and $h'_w = (\ldots, \mu', O'_w)$, if $\mu \sim \mu'$ and $O_w = O'_w$, then $\sigma_w(h_w) = \sigma_w(h'_w)$.

\begin{definition}
    A stationary equilibrium is a subgame-perfect Nash equilibrium in which all strategies are stationary.
\end{definition}

\section{Stationary equilibria in dynamic games}\label{seccion resultados}
In this section, we present the main results of this paper. This section consists of three subsections. In Subsection \ref{subseccion sin compromiso}, we consider a game where neither firms nor workers hold commitment. In this case, we show that any stable matching is the result of a stationary equilibrium. In Subsection \ref{subseccion firmas tienen compromiso}, we consider a market where only firms hold commitment. In this case, we show that, given a stable matching, two different equilibria produce this matching as the outcome of the game (one more restrictive than the other). In Subsection \ref{subseccion trabajadores tienen compromiso}, we consider a market where only workers hold commitment. Here, we also show that, given a stable matching, two different equilibria generate this matching as the outcome of the game (one more restrictive than the other). Note that when both sides of the market hold commitment, the dynamic game becomes a one-period static game where firms make offers, and workers either accept or reject, and the game then ends. In this setting, stable matchings are also supported by equilibria \citep[see][for more details]{alcalde1997hiring,alcalde2000simple}.

\subsection{The dynamic game without commitments}\label{subseccion sin compromiso} 

In this subsection, we study the case where no one has any commitment. When there is no commitment, the history leading to the current period does not change the continuation of the game and is therefore ignored by the agents in stationary equilibria. In other words, what happens in the current period does not affect future outcomes. Due to this independence, agents ignore the future and behave as in the static model.

The following result indicates that, in the absence of commitment, the static notion of stability captures the outcome of stationary equilibrium.

\begin{theorem}
    Given a dynamic matching game without commitment, the matching achieved in any stationary equilibrium is identical and stable in all periods. Conversely, for any stable matching, there exists a stationary equilibrium that achieves this matching in every period.
    \end{theorem}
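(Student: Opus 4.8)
The plan is to prove the two directions separately, exploiting the fact that in the no-commitment case all matchings are continuation-equivalent, so a stationary strategy must prescribe the same action at every private history whose last matching is arbitrary (for firms) and whose last matching is arbitrary but with a fixed offer set (for workers). Consequently, along the equilibrium path the game reduces to the same static two-stage game being played identically in every period, and I will be able to transfer the static stability argument of Gale--Shapley into the dynamic setting.

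For the ``only if'' direction, let $\sigma^\star$ be a stationary equilibrium and let $\mu = \mu_{\sigma^\star}$ be its outcome. First I would argue that $\mu$ is realized in every period: by stationarity the action profile played in period $1$ is replayed verbatim in every period $t$ (since $\mu^{t-1} \sim \mu^0$ for all $t$), hence $\mu^t = \mu$ for all $t$ and each agent's payoff is $\hat u_i(\mu_{\sigma^\star}) = u_i(\mu(i)) / (1-\delta_i)$. Next I would show $\mu$ is stable. Individual rationality is automatic since $\mu^t$ is individually rational by construction. For the no-blocking-pair part, suppose $(f,w)$ blocks $\mu$, i.e. $u_f(w) > u_f(\mu(f))$ and $u_w(f) > u_w(\mu(w))$. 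I would construct a profitable one-shot-style deviation that, by stationarity, is actually a profitable stationary deviation: firm $f$ deviates to the stationary strategy that always offers $w$, and worker $w$ deviates to the stationary strategy that accepts $f$ whenever $f$ is among her offers (keeping her prescribed behavior otherwise). The subtlety is that a single agent's deviation may not suffice since $w$ might not accept $f$'s new offer under $\sigma^\star_w$; I handle this by noting that a stationary equilibrium must in particular be a Nash equilibrium, and then I use the standard two-step blocking argument: either $w$ already accepts $f$ under $\sigma^\star$ when $f$ offers her (in which case $f$'s unilateral deviation strictly improves $f$, contradiction), or $w$ does not, in which case I examine $w$'s incentives — since accepting $f$ gives her $u_w(f)/(1-\delta_w) > u_w(\mu(w))/(1-\delta_w)$ by stationarity of the continuation, $w$'s unilateral deviation to ``accept $f$ when offered'' is profitable provided $f$ is offering her, so I need $f$ to be offering her, which loops back. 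The clean way around this loop is to invoke that $\sigma^\star$ is subgame-perfect: consider the subgame after a history in which $f$ has (off-path) offered $w$; in that subgame $w$'s stationary continuation must be a best response, forcing her to accept $f$; then $f$'s deviation to always-offer-$w$ is strictly profitable, the contradiction. I expect this interlocking-deviations argument to be the main obstacle and the place where the subgame-perfection (not merely Nash) and stationarity hypotheses both get used essentially.

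For the ``if'' direction, fix a stable matching $\mu$. I would define the stationary strategy profile $\sigma^\star$ as follows: every firm $f$ always offers $\mu(f)$ (and makes no offer if $\mu(f) = f$); every worker $w$, upon receiving offer set $O_w$, accepts the firm in $O_w \cup \{w\}$ that maximizes $u_w$ — in particular accepts $\mu(w)$ when $\mu(w) \in O_w$. These are stationary since they depend only on the current offer set (for workers) and on nothing at all (for firms), which is legitimate because every matching is continuation-equivalent to every other. On the equilibrium path this yields $\mu^t = \mu$ for all $t$. To verify it is a stationary equilibrium I must check no agent has a profitable deviation in any subgame. For a worker $w$: given the others' strategies, the set of offers $w$ can ever receive is always exactly $\{\mu(w)\}$ if $\mu(w)\neq w$ (no firm other than $\mu(w) = $ her partner ever offers her) or $\emptyset$ otherwise, so her only options are to accept $\mu(w)$ or stay unmatched; since $\mu$ is individually rational $u_w(\mu(w)) \ge 0$, so accepting is weakly optimal, and any deviation yields payoff $\le 0 \le u_w(\mu(w))/(1-\delta_w)$. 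For a firm $f$: if it deviates and offers some $w' \neq \mu(f)$, then $w'$ receives offers $\{\mu(w'), f\}$ (or just $\{f\}$ if $w'$ is unmatched under $\mu$) and by her strategy accepts the $u_{w'}$-maximizer; she accepts $f$ only if $u_{w'}(f) > u_{w'}(\mu(w'))$ (or $u_{w'}(f)\ge 0$ when $\mu(w')=w'$). But then, since $\mu$ is stable, we cannot also have $u_f(w') > u_f(\mu(f))$ — that would be a blocking pair. Hence whenever $w'$ would accept $f$'s deviating offer, $f$ weakly prefers $\mu(f)$; and if $w'$ rejects, $f$ is unmatched that period, which is worse. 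Since this holds period-by-period and payoffs are discounted sums, no deviation by $f$ is profitable in any subgame. This direction is essentially bookkeeping once the strategies are specified; the only care needed is the unmatched-agent cases and confirming the stationarity requirement is met under the no-commitment equivalence $\mu \sim \mu'$ for all $\mu,\mu'$.

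Finally I would remark that combining both directions gives the stated equivalence: stationary-equilibrium outcomes are exactly the stable matchings, each achieved in every period. $\square$
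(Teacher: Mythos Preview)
Your proposal is correct and follows essentially the same approach as the paper. The only notable difference is in the forward direction: the paper avoids your ``loop'' by first observing that, since without commitment a worker's response today cannot affect the offers she receives in the future, every worker's stationary best response is simply to accept her highest-utility offer in each period---so when $f$ deviates to offer $w$, acceptance is automatic and the blocking-pair contradiction is immediate (your appeal to ``the subgame after $f$ has offered $w$'' reaches the same conclusion, though note that this mid-period node is an information set, not a proper subgame in the paper's sense).
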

 \begin{proof}
 Consider any stationary equilibrium $ \sigma$. The strategies of the firms $(\sigma_f)_{f \in F}$ in the equilibrium are stationary, i.e., firms always make the same offer ($ o^t_f = o^{t'}_f $ for any periods $ t, t' $). The action of the workers, that is, their response in any period, does not affect the offers they will receive in subsequent periods because they do not commit. This implies that the best decision each worker $ w $ can make is to accept the offer that provides the highest utility in each period. Thus, if the resulting matching of the equilibrium is $ \mu $, then it is the same in each period, i.e., $ \mu = \mu^t $ for each $ t $.

Suppose that equilibrium $ \sigma $ results in an unstable matching $ \mu $. Since $\mu$ is individually rational, let $ (f, w) $ be a blocking pair for $ \mu $. Since $ w $ blocks $ \mu $ together with $ f $, it follows that $ u_w(f) > u_w(\mu(w)) .$
Note that,  $ w $'s strategy is always to accept the best offer. This means she does not receive an offer from $ f $; otherwise, she would accept it.

Suppose that in some period $ t $, firm $ f $ deviates from the equilibrium $ \sigma $ and makes an offer to $ w $; that is, $ \hat{o}^t_f = \{w\} $. By the previous observation, the worker $ w $ will accept the offer and be matched with $ f $. Then, the firm obtains a higher profit by deviating and making an offer to its blocking pair, contradicting that the strategy $ \sigma $ is an equilibrium.

Conversely, let us choose a stable matching $ \mu $ and consider the following profile of strategies $ \sigma $ such that:
\begin{enumerate}[(i)]
    \item Each firm $ f $ makes an offer to $ \mu(f) $; that is, $ o^t_f = \{\mu(f)\} $ for each $t$.
    \item Each worker $ w $ accepts the offer that provides the highest utility.
\end{enumerate}

Note that the strategies are stationary since firms always make the same offer and workers accept the best among the received offers.

The workers' strategies are optimal because without commitment, accepting the offer that provides the highest utility will not affect future choices. The firms' strategies are also optimal because if a firm $ f $ makes an offer to a worker $ w \neq \mu(f) $ it is because $ u_f(w) > u_f(\mu(f)) $. Then, for worker $ w $ it must hold that $ u_w(\mu(w)) > u_w(f) $. Otherwise, $(f, w) $ will block $ \mu $ contradicting its stability. Furthermore, since $ w $ follows the equilibrium, the offer from $ f $ will be rejected. If the firm makes an offer to $ w $ such that $ u_f(\mu(f)) > u_f(w) $, the offer may be accepted and thus the firm's utility decreases implying that $f$ has no incentive to deviate. 
Therefore, $ \sigma $ is a stationary equilibrium. 
 \end{proof}

 The following Corollary is a consequence of the previous theorem and the Single Agent Theorem.\footnote{The Single Agent Theorem states that if an agent is single in a stable matching, it is single in all stable matchings \citep{mcvitie1970stable}.}
\begin{corollary}
     Workers and firms who do not have a match in a stationary equilibrium remain unmatched in all stationary equilibria.
\end{corollary}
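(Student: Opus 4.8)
The plan is to derive the corollary directly from the Theorem together with the Single Agent Theorem cited in the footnote. The Theorem tells us two things: every stationary equilibrium produces the \emph{same} matching in every period, and that matching is stable; conversely every stable matching arises from some stationary equilibrium. So the set of matchings realized by stationary equilibria is exactly the set of stable matchings (viewed as period-invariant outcomes). The Single Agent Theorem says that the set of agents left single is the same across \emph{all} stable matchings. Composing these two facts gives the claim.

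Concretely, I would argue as follows. Let $\sigma$ be any stationary equilibrium and let $\mu$ be the matching it realizes; by the Theorem, $\mu$ is stable and $\mu^t = \mu$ for every period $t$, so an agent $i$ has ``no match in the stationary equilibrium $\sigma$'' precisely when $\mu(i) = i$. Now let $\sigma'$ be another stationary equilibrium realizing the stable matching $\mu'$. Suppose $i$ is unmatched under $\sigma$, i.e. $\mu(i) = i$. Since both $\mu$ and $\mu'$ are stable and the Single Agent Theorem guarantees that $\{\, j \in F \cup W : \mu(j) = j \,\} = \{\, j \in F \cup W : \mu'(j) = j \,\}$, we get $\mu'(i) = i$, i.e. $i$ is also unmatched in the equilibrium $\sigma'$. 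Since $\sigma'$ was an arbitrary stationary equilibrium, $i$ is unmatched in every stationary equilibrium. This holds for both workers and firms, which is exactly the statement.

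There is no real obstacle here; the content is entirely a bookkeeping composition of two already-established results. The only points to be careful about are: (i) using the first half of the Theorem to legitimately identify ``being single in the dynamic equilibrium'' with ``being single in the induced static stable matching'' — this is where period-invariance matters; and (ii) noting that the Single Agent Theorem applies to \emph{both} sides of the market symmetrically, so the conclusion covers firms as well as workers. If one wanted to be fully self-contained, one could even avoid invoking the Theorem's converse direction, since the statement quantifies only over stationary equilibria that actually exist and each such equilibrium already yields a stable matching; but invoking the Single Agent Theorem across all stable matchings is the cleanest route.
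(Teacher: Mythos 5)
Your argument is correct and is exactly the route the paper intends: combine the first half of the no-commitment theorem (every stationary equilibrium realizes a single stable matching in all periods) with the Single Agent Theorem of \cite{mcvitie1970stable} to transfer the set of unmatched agents across all stable matchings, hence across all stationary equilibria. The paper states the corollary as an immediate consequence of these two facts without writing out the details, so your write-up matches (and slightly elaborates) its proof.
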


\subsection{The dynamic game when firms hold commitment}\label{subseccion firmas tienen compromiso}

In this subsection, we examine the dynamic matching game where firms, that make the offers, are committed to their decisions. Workers are offered permanent positions, meaning they can resign but not be fired. This implies that if a worker wishes to improve their job situation, they must first resign from their current position and then wait for a new offer from a better firm. For the market under consideration, \cite{bonifacio2024counting} demonstrate that such an offer is always eventually made. They present a re-stabilization process and also provide a lower bound on the number of steps required for this process to reach a new stable matching. In our context, the number of steps in the re-stabilization process can be interpreted as the time a worker must wait to receive such an offer. This information, along with each worker's discount factor, will determine when a stable matching constitutes a stationary equilibrium.

Now, we examine the relationships between stationary equilibria and stable matchings depending on the discount factors of the workers.

Using a restrictive strategy regarding the firms' possible decisions, we demonstrate that any stable matching results from a stationary equilibrium. Then, the question arises whether stable matchings can result from another type of strategy that constitutes a subgame perfect Nash equilibrium. Given a stable matching and depending on the workers' patience, we show that such a matching can result from a stationary equilibrium.

\begin{theorem}\label{proposicion compromise firmas equilibria estacionario}
    Given a dynamic matching game where firms hold commitment, for every stable matching $ \mu $, there is a stationary equilibrium that achieves this matching.
\end{theorem}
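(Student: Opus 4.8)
The plan is to exhibit an explicit stationary strategy profile in the spirit of the no-commitment case, but now accounting for the fact that firms are frozen once matched and workers can only resign (never be fired). Fix a stable matching $\mu$. I would define the firms' strategy so that in the first period every firm $f$ offers $\mu(f)$, and thereafter an inactive firm is forced to renew its employee $\mu(f)$ while an active firm (one whose worker resigned, or who was never matched) makes the offer that maximizes its utility among workers that would accept. The workers' strategy is to always accept the offer yielding the highest utility (and to reject all offers rather than accept an unacceptable one). The first task is to check these strategies are well-defined and stationary: since the commitment here is unilateral (firms committed), the continuation-equivalence relation is equality of matchings, so stationarity only requires that the prescribed action depend on $\mu^{t-1}$ (and, for workers, on the current offer set), which is the case by construction.

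Next I would verify that this profile produces $\mu$ in every period: in period $1$ each $f$ offers $\mu(f)$, each worker receives exactly the offers $\{f : \mu(f)=w\}$ — a single offer since $\mu$ is a matching — and accepts it because it is individually rational ($\mu$ is stable hence individually rational), so $\mu^1=\mu$; then all matched firms become inactive and must renew, all matched workers stay put along the equilibrium path, so $\mu^t=\mu$ for all $t$. Then comes the equilibrium check. For workers: since a worker who stays never affects the offers she receives next period (firms on the path keep renewing), accepting the best current offer is myopically optimal and the only way to deviate profitably would be to resign and wait — but along $\mu$ any worker $w$ who resigns starts the re-stabilization process of \cite{bonifacio2024counting}, which I can invoke to say a new offer eventually arrives; the key point for the \emph{worker's} optimality is that under the active-firms'-best-response strategy the only offers that ever reach $w$ after she resigns from $\mu(w)$ come from firms $f$ with $u_f(w)>u_f(\mu(w))$, and stability of $\mu$ forces $u_w(f)<u_w(\mu(w))$ for every such $f$, so no attainable future stream beats staying; hence $w$ cannot gain. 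For firms: an inactive firm has no choice; an active firm is playing a best response by fiat, and in particular in period $1$ firm $f$ offering some $w\neq\mu(f)$ either has $u_f(w)\le u_f(\mu(f))$ (and would only lose, since $w$ might accept) or $u_f(w)>u_f(\mu(f))$, in which case stability gives $u_w(\mu(w))>u_w(f)$ so $w$ rejects and $f$ gains nothing. Assembling these observations gives that the profile is a stationary equilibrium achieving $\mu$.

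I would organize the write-up as: (1) define $\sigma$; (2) observe stationarity; (3) show the path is constantly $\mu$; (4) no worker can profitably deviate, citing the re-stabilization result of \cite{bonifacio2024counting} to characterize which offers a resigning worker can ever see and then using stability of $\mu$ to kill the deviation; (5) no firm can profitably deviate, using stability of $\mu$ exactly as in the proof of the no-commitment theorem. The main obstacle is step (4): I must argue carefully that no \emph{sequence} of deviations (resign, wait several periods absorbing a discounted string of zeros, then accept a better offer) can beat the constant stream $u_w(\mu(w))$. The cleanest route is to show that every firm $f$ from which $w$ could ever receive an offer after resigning satisfies $u_f(w)>u_f(\mu(f))\ (=u_f(\mu^{t-1}(f))$ on the relevant histories), because an active firm only offers to a worker it strictly prefers to its status quo; combined with the no-blocking-pair property of $\mu$, every such $f$ has $u_w(f)<u_w(\mu(w))$, so even the best post-resignation continuation is dominated by never resigning, regardless of $\delta_w$. (This is precisely why the \emph{less} restrictive strategy in the companion theorem needs a threshold on $\delta_w$, whereas here the restriction that active firms best-respond makes the deviation unprofitable for all discount factors.) Care is needed to ensure the re-stabilization process is actually consistent with the prescribed firm strategies — i.e., that ``active firm makes its utility-maximizing offer'' is compatible with the dynamics analyzed in \cite{bonifacio2024counting} — and this compatibility is the one place where I would slow down and give a detailed argument rather than a routine remark.
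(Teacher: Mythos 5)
Your firm-side analysis (step (5)) matches the paper's, but step (4) contains a genuine gap, and it stems from your choice of strategy. The paper proves this theorem with the fully restrictive profile in which \emph{every} firm offers $\mu(f)$ in \emph{every} period, regardless of history; under that profile a worker who resigns never receives any offer other than the one from $\mu(w)$, so the worker check is immediate and no condition on $\delta_w$ is needed. You instead let an active firm make its utility-maximizing offer among workers who would accept it, and then claim that any firm $f$ that could ever offer to the resigned worker $w$ must satisfy $u_f(w)>u_f(\mu(f))$, ``because an active firm only offers to a worker it strictly prefers to its status quo.'' That inference is false: once a firm has been rejected (directly by $w$, or down the vacancy chain triggered by $w$'s resignation), its status quo is being \emph{unmatched}, i.e.\ $u_f(\mu^{t-1}(f))=u_f(f)=0$, not $u_f(\mu(f))$. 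Such a firm will work down its list and offer to workers it ranks \emph{below} its $\mu$-partner, and for those offers the no-blocking-pair property of $\mu$ tells you nothing about $u_w(f)$ versus $u_w(\mu(w))$. Concretely: after $w$ resigns, $f_0=\mu(w)$ poaches some $w_1$ it ranks below $w$; the vacated firm $f_1=\mu(w_1)$ may then offer to $w$ with $u_{f_1}(w)<u_{f_1}(w_1)$ and $u_w(f_1)>u_w(f_0)$, which is perfectly consistent with stability of $\mu$. This is not a pathological case --- it is exactly the content of the re-stabilization result you cite, which the paper uses in the companion theorem to show that a resigning worker \emph{does} eventually receive a strictly better offer, and which is precisely why that theorem must impose $\delta_w<c^w$. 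Your parenthetical remark has it backwards: your profile \emph{is} essentially the less restrictive one, so ``resign and wait'' can be profitable for patient workers and your equilibrium claim fails for $\delta_w$ close to $1$.

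The repair is to adopt the paper's restrictive strategy (firms always offer $\mu(f)$, even after rejection): then the only offer $w$ can ever see is from $\mu(w)$, resignation yields a stream of zeros followed at best by re-accepting $\mu(w)$, and individual rationality of $\mu$ closes the argument for every discount factor. If you insist on the active-firm best-response strategy, you must add the hypothesis $\delta_w\in(0,c^w)$ and run the geometric-series comparison --- but that is the paper's second theorem in this subsection, not this one.
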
 
\begin{proof}
Let $ \mu $ be a stable matching, and consider the following strategy profile $ \sigma $ at each period $ t $:
\begin{enumerate}[(i)]
    \item Each firm $ f $ makes an offer to $ \mu(f) $; that is, $ o^t_f = \mu(f) $.
    \item Each worker $ w $ accepts the offer that provides the highest utility.
\end{enumerate} 

We will prove that the strategy $ \sigma $ is a stationary equilibrium.

Assume that $ f $, an active firm in the first period, deviates from $ \sigma $ by making an offer such that $ w \neq \mu(f)$ ($\hat{o}^1_f=w$), and  all other firms $ \hat{f} \neq f $ follow the strategy $ \sigma $ ($ o_{\hat{f} }= \mu(\hat{f}) $). We analyze the possible deviations for firm  $f$:
\begin{description}
    \item[\textbf{(i) $\boldsymbol{ \hat{o}^1_{f} = w $ and $ u_{f}(w) > u_{f}(\mu(f) )}$:}] Since $ \mu $ is stable, for the worker $ w $ it must hold that $ u_{w}(\mu(w)) > u_{w}(f) $. Otherwise, $ (f, w) $ would block $ \mu $ contradicting its stability. As workers play $ \sigma $, meaning they choose the offer that provides the highest utility, the offer from $ f $ will be rejected.
    
    \item[\textbf{ (ii) $\boldsymbol{ \hat{o}^1_{f} = w $ and $ u_{f}(\mu(f)) > u_{f}(w)} $:}] Then, $ w $ might accept the offer. Since firms hold commitment, they cannot layoff any worker, and therefore, the utility of firm $ f $ will be:
   $$
   \sum_{t=1}^\infty \delta_{f}^{t-1} u_{f}(w) < \sum_{t=1}^\infty \delta_{f}^{t-1} u_{f}(\mu(f))
   $$
   Hence, firm $ f$ will not benefit by making offer $ \hat{o}^1_{f}$.
   \item[\textbf{(iii) $\boldsymbol{\hat{o}^1_{f} =f }$, meaning that $\boldsymbol{f}$ deviates by making no offer:}] In the next period, the best outcome for $ f $ is to obtain $\mu(f)$, thus
   $$
   \sum_{t=2}^\infty \delta_{f}^{t-1} u_{f}(\mu(f)) < 
   \sum_{t=1}^\infty \delta_{f}^{t-1} u_{f}(\mu(f)).
   $$
   Therefore, the firm does not benefit by not making offers in period 1.
\end{description}

Now consider the strategies of workers. Firms do not deviate, meaning \( o_f = \mu(f) \) for all \( f \in F \). For each \( w \in W \), in every period \( t \), we have \( \mu(w) \in O^t_w \). Since \( \mu \) is stable, the offer that provides the highest utility to each \( w \) is \( \mu(w) \). Therefore, the workers' strategies are optimal, as without commitment, they accept the offer that maximizes their utility.

Thus, the strategy $ \sigma $ is a stationary equilibrium, whose outcome is $ \mu $.
    \end{proof} 
    
Note that the strategy $\sigma$ considered in Theorem \ref{proposicion compromise firmas equilibria estacionario} is restrictive with respect to the possible actions of the firms. Despite being rejected, a firm makes the same offer in the subsequent period. This raises the following question: Can stable matchings be achieved through a different type of strategy, allowing to firms behave differently? i.e. to make different offers? Fortunately, the answer is affirmative. However, before presenting the result that addresses this, we need to discuss the re-stabilization process introduced by \cite{bonifacio2024counting} and adapted to our context.

Assume that the market clears at a stable matching other than the worker-optimal one, leaving room for improvement from the workers’ perspective. We can interpret each period in a dynamic matching game where firms are committed to their offers as an iteration of the re-stabilization process presented in \cite{bonifacio2024counting}. In the first stage, each firm makes an offer to the worker assigned to them under the initial matching $\mu$, and each worker privately observes the offers she receives. Suppose that worker $w$ wishes to improve her labor situation and decides to adopt a strategy of resigning and waiting for a better offer. Consequently, worker $w$ rejects all offers, while the other workers accept the offer that provides them with the highest utility. The firm left unmatched in the previous stage, and then active, makes an offer to the next worker with a higher utility who has not previously rejected it and is willing to accept it, while the other firms repeat their previous offers. This process continues until worker $w$ receives a better offer.
Under this situation, the main result presented by \cite{bonifacio2024counting} establishes how many steps are required by the algorithm to re-stabilize the market.
The length of this process reflects the time the worker must wait to secure a new position. Understanding this timeline is crucial, as workers will be unemployed during this adjustment period, which affects their decision to resign or remain in their current position. In this way, we can determine whether the strategy of resigning and waiting for a better offer constitutes a stationary equilibrium.

Let $ w  \in W $ and $ \mu,\nu$ be two stable matchings where $\mu$ is the initial matching, and  $\nu$ is the re-stabilized matching when $w$ decides to improve her labor situation, provided by the re-stabilization process presented in \cite{bonifacio2024counting}. Denote by $ k(w ) $ the number of periods necessary for the firm $\nu(w )$ to make an offer to worker $ w  $ provided by the results in \cite{bonifacio2024counting}. Let $c^{w}$ defined as follows: 
$$ \label{cota factor de descuento}
c^{w } = \left( \frac{u_{w }(\mu(w ))}{u_{w }(\nu(w ))} \right)^{\frac{1}{k(w)}}.
$$ 

The following theorem guarantees that if the discount factor of each worker $w$ is bounded by $c^w$, then the stable matching $\mu$ results from a different stationary strategy, where firms have greater flexibility in their actions.

Let $\mu$ be a stable matching, and consider the following strategy profile $ \sigma $ at each period $ t $ in the dynamic game:
 \begin{enumerate}[(i)]
    \item For $t=1$, each firm $ f $ makes an offer a $\mu(f)$, i.e. $ o^1_f=\mu(f) $. For $t>1$, if $\mu^{t-1}(f)=\mu(f)$, then $o^t_f=\mu(f).$ Otherwise ($\mu^{t-1}(f)=\emptyset$), then $o^t_f=w$ where $w\in W$ is such that $u_w(f)>u_w(\mu^{t-1}(w))$ and $u_f(w)>u_f(w') $ for each $w'\in W\setminus\{w,\mu(f)\}.$  
    \item Each worker $ w $ accepts the offer that provides the highest utility.
\end{enumerate} 
Note that condition (i) for \(t > 1\) states that, if firm \(f\) is rejected by its partner under \(\mu\), it makes a new offer to a worker \(w\) who provides the highest utility among those workers who would not reject it.

\begin{theorem}
     Given a dynamic matching game where firms hold commitment, let $ \mu $ be a stable matching. If $ \delta_w \in (0, c^{w}) $ for each $ w \in W $, then $\sigma$ is a stationary equilibrium whose final matching is $ \mu $.
\end{theorem}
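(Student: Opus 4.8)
The plan is to verify the three defining properties of a stationary equilibrium: stationarity of $\sigma$, the on-path outcome, and the absence of profitable deviations in every subgame. Stationarity holds by inspection: under $\sigma$ a firm's action at period $t$ is a function only of $\mu^{t-1}$ (compared against the fixed reference matching $\mu$), and a worker's action is a function only of the offers she currently receives; since under one-sided commitment continuation-equivalence of matchings is plain equality, $\sigma$ is stationary. (One also records that the off-path prescription ``offer the most preferred willing worker'' is single-valued by strictness of utilities and that, along the re-stabilization process, such a worker exists; otherwise set $o^t_f=f$. If $\mu=\mu_W$, or for a worker already at her best stable partner, $\nu(w)=\mu(w)$ and $c^{w}=1$, so the hypothesis is vacuously satisfiable.) The on-path outcome is $\mu$ by induction on $t$: in period $1$ every firm offers $\mu(f)$, so each matched worker receives exactly the offer of $\mu(w)$ and accepts it by individual rationality of $\mu$, giving $\mu^1=\mu$; and if $\mu^{t-1}=\mu$ then every firm again offers $\mu(f)$, so $\mu^t=\mu$. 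Hence the final matching is $\mu$.

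For the firm side I would argue exactly as in Theorem~\ref{proposicion compromise firmas equilibria estacionario}, via the one-shot deviation principle (legitimate here because payoffs are bounded and discounted). At any state where $f$ is prescribed to offer $\mu(f)$: offering some $w$ with $u_f(w)>u_f(\mu(f))$ is unhelpful because $w$ rejects --- by stability of $\mu$ on path, and off path because $w$ strictly prefers her current partner --- so $f$ merely loses one period; offering some $w$ with $u_f(w)<u_f(\mu(f))$ is either rejected (one lost period) or accepted, and since $f$ is thereafter committed this locks $f$ into a strictly worse match forever; and making no offer costs one period. At the off-path states that occur while a worker is resigning, the prescribed offer to $f$'s best currently-willing worker is optimal: any accepted offer is permanent, no worker $f$ strictly prefers is willing, and by the structure of the re-stabilization process a committed firm gains nothing by postponing its offer. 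Thus no firm has a profitable deviation, for every $\delta_f\in(0,1)$; notice that, in contrast to the workers, no restriction on $\delta_f$ is needed.

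The substance of the proof is the worker side. On path, $w$ receives only the offer of $\mu(w)$, so accepting the best offer means accepting $\mu(w)$; the sole way $w$ can ever be matched to a different firm is to resign from $\mu(w)$ and wait, which by construction of $\sigma$ sets off the re-stabilization process of \cite{bonifacio2024counting} starting from $\mu$. By their results the first offer $w$ receives that improves on $\mu(w)$ comes from $\nu(w)$ after exactly $k(w)$ periods, during which $w$ is unemployed (any earlier offer she receives is worse than $\mu(w)$ and is rejected). Hence the best payoff $w$ can extract from resigning is $\frac{\delta_w^{k(w)}u_w(\nu(w))}{1-\delta_w}$, whereas conforming yields $\frac{u_w(\mu(w))}{1-\delta_w}$; conforming is at least as good precisely when $\delta_w^{k(w)}u_w(\nu(w))\le u_w(\mu(w))$, i.e.\ when $\delta_w\le c^{w}$, and strictly better under the hypothesis $\delta_w<c^{w}$. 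By stationarity it suffices to consider a resignation at period $1$, and a deviation that resigns again later only adds further unemployment before reaching (at best) the same or a worse partner, so it cannot improve on a single episode. Therefore $\sigma$ is a Nash equilibrium of every subgame, hence a stationary equilibrium, with outcome $\mu$.

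I expect the main obstacle to be precisely the claim that, off path, $\sigma$ reproduces the re-stabilization dynamics of \cite{bonifacio2024counting} period by period --- that the first improving offer reaching $w$ is $\nu(w)$'s and arrives after exactly $k(w)$ periods, and that no alternative, possibly non-stationary, strategy lets $w$ secure a strictly better partner or reach $\nu(w)$ strictly sooner. This is where their structural results (existence and monotonicity of the re-stabilizing chain, together with its step count) must be invoked with care. A secondary difficulty is the off-path incentive check for committed firms: since an accepted offer is irreversible, one must rule out that a firm prefers to hold out for a worker who becomes willing only later along a vacancy chain.
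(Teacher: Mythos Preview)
Your proposal is correct and follows essentially the same route as the paper: a case analysis of one-period firm deviations (better/worse/no offer, both at the initial state and at the off-path state where a firm has just been vacated), followed by the geometric-series comparison for a worker who resigns and waits $k(w)$ periods for $\nu(w)$, yielding exactly the threshold $c^{w}$. If anything, you are more explicit than the paper in checking stationarity, in establishing the on-path outcome by induction, and in flagging that the off-path identification of $\sigma$ with the re-stabilization process of \cite{bonifacio2024counting} (and the resulting exact count $k(w)$) is where the argument genuinely leans on their structural results; the paper invokes this correspondence without further elaboration.
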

\begin{proof}
First, we prove that the strategy $ \sigma $ is a stationary equilibrium.
To do this, assume that $ f $, an active firm in the first period, deviates from $ \sigma $ by making an offer to some $\widehat{w}\in W$ such that $ \widehat{w} \neq \mu(f)$ ($\hat{o}^1_f=\widehat{w}$), and  all other firms $ \hat{f} \neq f $ follow the strategy $ \sigma $ ($ o_{\hat{f} }= \mu(\hat{f}) $). We analyze the possible deviations for firm  $f$:
\begin{description}
    \item[\textbf{(i) $\boldsymbol{\hat{o}^1_{f} = \widehat{w} $ and $ u_{f}(\widehat{w}) > u_{f}(\mu(f) )}$:}] Since $ \mu $ is stable, for the worker $ \widehat{w} $ it must hold that $ u_{\widehat{w}}(\mu(\widehat{w})) > u_{\widehat{w}}(f) $. Otherwise, $ (f, \widehat{w}) $ would block $ \mu $ contradicting its stability. As workers play $ \sigma $, meaning they choose the offer that provides the highest utility, the offer from $ f $ will be rejected.
    
    \item[\textbf{(ii) $\boldsymbol{\hat{o}^1_{f} = \widehat{w} $ and $ u_{f}(\mu(f)) > u_{f}(\widehat{w}) }$:}] Then, $ \widehat{w} $ might accept the offer. Since firms hold commitment, they cannot layoff any worker and, therefore, the utility of firm $ f $ will be:
   $$
   \sum_{t=1}^\infty \delta_{f}^{t-1} u_{f}(\widehat{w}) < \sum_{t=1}^\infty \delta_{f}^{t-1} u_{f}(\mu(f)).
   $$
   Hence, firm $ f$ will not benefit by making offer $ \hat{o}^1_{f}$.
   \item[\textbf{(iii) $\boldsymbol{ \hat{o}^1_{f} =f}$, meaning that $\boldsymbol{f }$ deviates by making no offer:}] In the next period, the best outcome for $ f $ is to obtain $\mu(f)$, thus
   $$
   \sum_{t=2}^\infty \delta_{f}^{t-1} u_{f}(\mu(f)) < 
   \sum_{t=1}^\infty \delta_{f}^{t-1} u_{f}(\mu(f)).
   $$
   Then, the firm does not benefit by not making offers in period 1.
\end{description}

Now, w.l.o.g. consider that firm $f$ inactive in the first period but rejected at the end of that period, i.e., $\mu^1(f)=\emptyset.$  Assume $f$ deviates from strategy $\sigma$, i.e., $\hat{o}^2_{f} \neq w$.
Thus, there are three possible deviations for firm  $f$ to analyze:

\begin{description}

 \item[\textbf{(i) $\boldsymbol{\hat{o}^2_{f} = \widehat{w} $ and $ u_{f}(\widehat{w}) > u_{f}(w )}$:}] for the worker \(\widehat{w}\), it must hold that \(u_{\widehat{w}}(\mu^1(\widehat{w})) > u_{\widehat{w}}(f)\). Otherwise, worker \(w\) is not the one who provides the highest utility among those who would accept an offer from \(f\), contradicting the definition of the strategy \(\sigma\). As the workers play \(\sigma\), meaning they choose the offer that provides the highest utility, the offer from \(f\) will be rejected.

  \item[\textbf{(ii) $\boldsymbol{\hat{o}^2_{f} = \widehat{w} $ and $ u_{f}(w) > u_{f}( \widehat{w} )}$:}] since workers have no commitment, then $ \widehat{w}$ accept the offer $\hat{o}^2_{f}$. Since firms hold commitment, they cannot layoff any worker, and therefore, the utility of firm $ f $ is:
   $$
   \sum_{t=2}^\infty \delta_{f}^{t-1} u_{f}(\widehat{w}) < \sum_{t=2}^\infty \delta_{f}^{t-1} u_{f}(w)
   $$
   Hence, firm $ f$ will not benefit by making offer $ \hat{o}^2_{f}$.
    \item[\textbf{(iii) $\boldsymbol{ \hat{o}^2_{f} =f}$, meaning that $\boldsymbol{f }$ deviates by making no offer:}] In the next period, the best outcome for $ f $ is to make offer  $\hat{o}^3_{f} =w$ and worker $w$ accept such an offer. Thus,
   $$
   \sum_{t=3}^\infty \delta_{f}^{t-1} u_{f}(w) < 
   \sum_{t=2}^\infty \delta_{f}^{t-1} u_{f}(w).
   $$
   Hence, the firm does not benefit by not making offers in period 2.
   
   \end{description}
Then, from cases (i)--(iii),  firm $f$ does not benefit by deviating in the second period and, therefore, follows the strategy $\sigma$ making offer $o^2_{f} = w$, where $w$ is the next worker on the list who will not reject it.

Now consider the strategies of the workers, i.e., each worker \( w \) accepts the offer that provides the highest utility. In this case, workers are the only ones that can deviate. For each \( w \in W \), in every period \( t > 1 \) where \(\mu^{t-1} = \mu\), we have \( \mu(w) \in O_w \). Since \( \mu \) is stable, the offer that provides the highest utility to each \( w \) is \( \mu(w) \). Assume that at stage \( t = 1 \), worker \( w \) decides to reject all offers (\( r^{1}_w = w \)). Following the re-stabilization process of \cite{bonifacio2024counting}, starting from matching $\mu$ when worker $w$ wants to improve her labor situation, there is a stable matching \( \nu \) such that \( u_{w}(\nu(w)) > u_{w}(\mu(w)) \). Let \( k(w) \) be the time that takes firm \( \nu(w) \) to make an offer to worker \( w \), i.e., \( \nu(w) \in O_w \). Worker $w$ will benefit from deviating by rejecting all offers in period 1 and waiting for the offer from \( \nu(w) \) if
\begin{equation}\label{ecu 1 teorema de la segunda estrategia}
    \sum_{t=1}^\infty \delta_{w}^{t-1} u_{w}(\mu(w)) < \sum_{t=1+k(w)}^\infty \delta_{w}^{t-1} u_{w}(\nu(w)).
\end{equation}
 By the resolution of the Geometric Series, we have\footnote{For the resolution of Geometric Series see \cite{stewart2012calculus}.}
   $$
   u_{w}(\mu(w))\frac{1}{1-\delta_{w}}  < u_{w}(\nu(w))\frac{\delta_{w}^{k(w)}}{1-\delta_{w}},
   $$
where $k(w)$ is the number of periods necessary for the firm $\nu(w )$ to make an offer to worker $ w  .$ Now, operating we obtain 
$$
   \frac{u_{w}(\mu(w))}{u_{w}(\nu(w))}  < \delta_{w}^{k(w)},
   $$
and, thus
$$
   \left(\frac{u_{w}(\mu(w))}{u_{w}(\nu(w))}\right) ^{\frac{1}{k(w)}}  < \delta_{w}.
   $$
That is, worker \( w \) will satisfy \eqref{ecu 1 teorema de la segunda estrategia} if her discount factor meets the condition \( \delta_{w} > \left(\frac{u_{w}(\mu(w))}{u_{w}(\nu(w))}\right)^{\frac{1}{k(w)}} \). Since by hypothesis we have \( \delta_{w} \in (0, c^{w}) \), where \( c^{w } = \left( \frac{u_{w}(\mu(w ))}{u_{w }(\nu(w ))} \right)^{\frac{1}{k(w)}} \), worker \( w \) does not benefit from deviating from $\sigma$.

Therefore, the strategy $ \sigma $ is a stationary equilibrium with outcome $ \mu $.   
    \end{proof}


\subsection{The dynamic game when workers hold commitment}\label{subseccion trabajadores tienen compromiso}

In this subsection, we examine the dynamic matching game where workers hold commitment, but firms do not. In this case, workers have job offers that are not permanent, i.e. although they cannot quit, they may be fired. Using a restrictive strategy concerning the firms' possible actions, we demonstrate that any stable matching is the outcome of a stationary equilibrium. Moreover, we prove that given a stable matching, and depending on firms' patience (i.e. considering a strategy of firms that is not restrictive as in the previous case), such a matching is the outcome of a stationary equilibrium.

\begin{theorem}\label{theorema con trabajadores comprometidos con estrategia restrictiva}
   Given a dynamic matching game where workers hold commitment, any stable matching $ \mu $ can be supported as a stationary equilibrium.
\end{theorem}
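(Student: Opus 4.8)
The plan is to mirror the argument used for Theorem~\ref{proposicion compromise firmas equilibria estacionario}, exchanging the roles of firms and workers to fit the present commitment structure. Fix a stable matching $\mu$ and propose the stationary profile $\sigma$ in which (i) every firm $f$ offers $\mu(f)$ in every period, i.e.\ $o^t_f=\mu(f)$ for all $t$, and (ii) every active worker accepts the offer yielding her the highest utility (a committed worker simply accepts the forced renewal). First I would check that $\sigma$ produces $\mu$ in every period: in period~$1$ each worker $w$ receives exactly the offer $\mu(w)$ (or none, if $\mu(w)=w$), which she accepts by individual rationality of $\mu$, so $\mu^1=\mu$; for $t\ge 2$ every matched worker is committed and is renewed by her employer, while every $\mu$-unmatched worker receives no offer, so $\mu^t=\mu$. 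Stationarity is immediate: the firms' offers are history-independent, and each worker's rule depends only on her current set of offers, so it respects the equivalence relation $\sim$, which under unilateral commitment is equality of matchings.

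Next I would rule out profitable deviations by firms, essentially as in Theorem~\ref{proposicion compromise firmas equilibria estacionario}. Suppose firm $f$ deviates in the first period to an offer $\widehat w\neq\mu(f)$. If $u_f(\widehat w)>u_f(\mu(f))$, stability of $\mu$ forces $u_{\widehat w}(\mu(\widehat w))>u_{\widehat w}(f)$ (otherwise $(f,\widehat w)$ blocks $\mu$), so $\widehat w$ — who still receives the offer of $\mu(\widehat w)$, or finds $f$ unacceptable when $\mu(\widehat w)=\widehat w$ — rejects $f$; then $f$ is unmatched that period and the best it can achieve afterwards is to recover $\mu(f)$, which is not an improvement. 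If $u_f(\mu(f))>u_f(\widehat w)$ and $\widehat w$ accepts, then — the point specific to this case — $\widehat w$ becomes committed to $f$ \emph{permanently}, so $f$ earns $\sum_{t\ge 1}\delta_f^{t-1}u_f(\widehat w)<\sum_{t\ge 1}\delta_f^{t-1}u_f(\mu(f))$; and if $\widehat w$ rejects we are back to the previous subcase. Making no offer, or in a later period declining to renew $\mu(f)$, merely postpones obtaining $\mu(f)$ and is likewise not profitable. Hence no firm gains by deviating.

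The only genuinely new point concerns workers' deviations, and here the restrictiveness of the firms' strategy does all the work. A committed worker (every matched worker from period~$2$ onward) has a singleton action set and cannot deviate; a $\mu$-unmatched worker never receives any offer and cannot deviate either. The remaining case is a worker $w$ with $\mu(w)\neq w$ in period~$1$, whose only offer is $\mu(w)$: if she rejects it she is unmatched that period, and since every firm other than $\mu(w)$ keeps offering its own $\mu$-partner, the only offer $w$ will ever see in any subsequent period is again $\mu(w)$. Thus rejecting can at best postpone obtaining $\mu(w)$, and since $\mu$ is individually rational ($u_w(\mu(w))>0$ when $\mu(w)\neq w$) this is strictly worse than accepting immediately. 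Therefore $\sigma$ is a stationary equilibrium with outcome $\mu$.

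I expect the only delicate points to be bookkeeping ones: handling the asymmetry of unilateral worker commitment (a committed worker must accept a renewal, but the firm is never forced to renew, so one must be explicit about what happens to a worker whose employer declines to renew and to the firm that declines), and stressing that, unlike the firms-commitment case of Subsection~\ref{subseccion firmas tienen compromiso}, no discount-factor threshold is needed here — precisely because the firms' restrictive strategy never lets a better offer reach any worker, so there is no ``resign and wait'' temptation to be controlled.
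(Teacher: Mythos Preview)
Your proposal is correct and follows essentially the same approach as the paper: the identical restrictive profile $\sigma$ (each firm always offers $\mu(f)$, each active worker accepts her best offer), followed by ruling out firm and then worker deviations---your period-$1$ analysis via stability of $\mu$ parallels the paper's period-$\widetilde t$ analysis via worker commitment, and your treatment of the worker ``reject and wait'' deviation is in fact more explicit than the paper's. One small slip worth flagging: in the case $u_f(\mu(f))>u_f(\widehat w)$ you write that $\widehat w$ becomes committed to $f$ ``permanently'', but in this subsection firms hold \emph{no} commitment and $f$ could lay $\widehat w$ off the next period; this does not affect your conclusion, since even the best continuation for $f$ after hiring a worse worker in period~$1$ is still dominated by following $\sigma$ throughout.
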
 
\begin{proof}
Let $ \mu $ be a stable matching, and consider the following strategy profile $ \sigma $ at each period $ t $:
\begin{enumerate}[(i)]
    \item Each firm $ f $ makes an offer to $ \mu(f) $; that is, $ o^t_f = \mu(f) $.
    \item Each worker $ w $ accepts the offer that provides the highest utility.
\end{enumerate} 

We will prove that the strategy $ \sigma $ is a stationary equilibrium.

Assume that until period $\widetilde{t}-1>2$, all agents play the strategy profile $\sigma$. Also, assume that there is a firm that deviates at stage $\widetilde{t}$, i.e., $\hat{o}^{\widetilde{t}}_f \neq \mu(f)$. Now, we have three cases to consider:

\begin{description}

\item [\textbf{(i) $\boldsymbol{\hat{o}^{\widetilde{t}}_{f} = w}$, $ u_{f}(w) > u_{f}(\mu(f)) $, and $\mu^{\widetilde{t}-1}(w)\neq w$:}] Since workers hold commitment, and the other firms are playing $\sigma$, $w$ rejects the offer $\hat{o}^{\widetilde{t}}_{f}$.

\item [\textbf{(ii) $\boldsymbol{\hat{o}^{\widetilde{t}}_{f} = w}$, $ u_{f}(w) > u_{f}(\mu(f)) $, and $\mu^{\widetilde{t}-1}(w)= w$:}] Since $\mu$ is stable and agents have been playing $\sigma$ until stage $t-1$, we have that $0 = u_w(\mu(w)) > u_w(w)$. Then, $w$ rejects the offer $\hat{o}^{\widetilde{t}}_{f}$.

\item [\textbf{(iii) $\boldsymbol{ \hat{o}^{\widetilde{t}}_{f} = f}$, meaning that $\boldsymbol{f}$ deviates by making no offer:}] In period $\widetilde{t}+1$, since the rest of the agents are playing $\sigma$, the best outcome for $f$ is to obtain $\mu(f)$, thus
   $$
   \sum_{t=1}^{\widetilde{t}-1} \delta_{f}^{t-1} u_{f}(\mu(f)) +\sum_{t=\widetilde{t}+1}^\infty \delta_{f}^{t-1} u_{f}(\mu(f))< 
   \sum_{t=1}^\infty \delta_{f}^{t-1} u_{f}(\mu(f)).
   $$
   Therefore, the firm does not benefit by not making offers in period $\widetilde{t}$.
\end{description}
Then, from cases (i)--(iii),  firm $f$ does not benefit by deviating in period $\widetilde{t}$ and, therefore, follows the strategy $\sigma$ making offer $o^{\widetilde{t}}_{f} = \mu(f)$.

Now, consider the strategies of the workers. Firms do not deviate, meaning \( o_f = \mu(f) \) for all \( f \in F \). For each \( w \in W \), in every period \( t \), we have \( \mu(w) \in O^t_w \). Since \( \mu \) is stable, the offer that provides the highest utility to each \( w \) is \( \mu(w) \). Therefore, the workers' strategies are optimal, as without commitment, they accept the offer that maximizes their utility.

Therefore, the strategy $ \sigma $ is a stationary equilibrium, whose outcome is $ \mu $.
    \end{proof} 

Note that, as in the previous section, the strategy $\sigma$ considered in Theorem \ref{theorema con trabajadores comprometidos con estrategia restrictiva} is restrictive with respect to the
possible actions of the firms. Despite being rejected, a firm makes the same offer in
the subsequent period. Again, this fact opens the same question as before: Can stable matchings be
achieved through a different type of strategy, allowing to firms behave differently?
i.e. to make different offers? Fortunately, again the answer is affirmative.

Let $ f \in F $. Let $c^{f}$ defined as follows: 
$$ 
c^{f } = \left( \frac{u_{f }(\mu(f ))}{u_{f }(\mu_{F}(f ))} \right).
$$ 

The following theorem guarantees that if the discount factor of each firm $f$ is bounded by $c^f$, then the stable matching $\mu$ results from a different stationary strategy, where firms have more flexibility in their actions.

Let $\mu$ be a stable matching, and consider the following strategy profile $ \sigma $ at each period $ t $ in the dynamic game:
 \begin{enumerate}[(i)]
    \item For $t=1$, each firm $ f $ makes an offer a $\mu(f)$, i.e. $ o^1_f=\mu(f) $. For $t>1$, if each firm $f$ observe that $\mu^{t-1}=\mu$, then $o^t_f=\mu(f).$ Otherwise (if each firm $f$ observe $\mu^{t-1}\neq \mu$), then $o^t_f=\mu_F(f).$ 
    \item Each worker $ w $ accepts the offer that provides the highest utility.
\end{enumerate} 
Note that condition (i) for \(t > 1\) states that, if firm \(f\) observes that in the previous period, the resulting matching is not \(\mu\), the offer it makes is $\mu_F(f)$, i.e. its partner under the firm-optimal stable matching.

\begin{theorem}
  Given a dynamic matching game where workers hold commitment, let $ \mu $ be a stable matching. If $ \delta_f \in (0, c^{f}) $ for each $ f \in F $, then $\sigma$ is a stationary equilibrium supporting the stable matching $ \mu $.
\end{theorem}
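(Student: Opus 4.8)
The plan is to mirror the structure of the previous theorem's proof: verify that $\sigma$ is stationary with respect to the unilateral-commitment equivalence relation (here $\mu\sim\mu'$ iff $\mu=\mu'$), confirm that its outcome is $\mu$ in every period, and then rule out profitable deviations by firms first and workers second. First I would observe that if all firms follow $\sigma$, then in period $1$ each firm offers $\mu(f)$, each worker's best offer is $\mu(w)$ by stability, so $\mu^1=\mu$; inductively, having observed $\mu^{t-1}=\mu$, every firm again offers $\mu(f)$, so $\mu^t=\mu$ for all $t$. Stationarity holds because the prescribed action depends only on whether the last realized matching equals $\mu$, which is exactly the payoff-relevant state under unilateral commitment.

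Next I would handle firm deviations. Suppose firm $f$ deviates at some period $\widetilde{t}$ (while everyone else plays $\sigma$). If $f$ offers some $w\neq\mu(f)$ with $u_f(w)>u_f(\mu(f))$: when $w$ is committed to her current employer under $\mu^{\widetilde t-1}$ she must reject (workers hold commitment); when $w$ is unmatched under $\mu^{\widetilde t-1}$, stability of $\mu$ forces $u_w(\mu(w))>u_w(w)=0$, and since the situation in earlier periods was $\mu$, $w$ rejects, exactly as in Theorem \ref{theorema con trabajadores comprometidos con estrategia restrictiva}. If $f$ offers $w$ with $u_f(\mu(f))>u_f(w)$, the offer may be accepted but yields strictly lower per-period payoff. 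The genuinely new ingredient is the punishment phase: once $\mu^{\widetilde t}\neq\mu$, from period $\widetilde t+1$ onward all other firms switch to offering $\mu_F(\hat f)$, so the continuation converges to $\mu_F$. Since firms are active (no commitment on the firm side), $f$ can at best secure $\mu_F(f)$ in the continuation. Thus a deviation that triggers the punishment gives $f$ at most
$$
\sum_{t=1}^{\widetilde t-1}\delta_f^{t-1}u_f(\mu(f)) + \delta_f^{\widetilde t-1}\,x + \sum_{t=\widetilde t+1}^{\infty}\delta_f^{t-1}u_f(\mu_F(f)),
$$
where $x\le \max\{u_f(\mu(f)),\text{best one-shot gain}\}$; comparing this with the equilibrium payoff $\sum_{t\ge1}\delta_f^{t-1}u_f(\mu(f))$ and summing the geometric tails yields that no deviation pays precisely when $\delta_f$ is small enough that the present value of the one-period gain is dominated by the discounted loss $u_f(\mu(f))-u_f(\mu_F(f))$ incurred forever after. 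This is where the bound $c^f=u_f(\mu(f))/u_f(\mu_F(f))$ enters: one rearranges the inequality $u_f(\mu(f))/(1-\delta_f)\ge (\text{gain})+\delta_f u_f(\mu_F(f))/(1-\delta_f)$ into a condition on $\delta_f$, and $\delta_f<c^f$ guarantees it (using $u_f(\mu(f))\ge u_f(\mu_F(f))$, which holds because $\mu_F$ is firm-optimal only among stable matchings — note one must be slightly careful here, since $\mu_F$ Pareto-dominates $\mu$ for firms, so in fact $c^f\le 1$ and the punishment is a genuine threat exactly when $\mu\neq\mu_F$; if $\mu=\mu_F$ the punishment is vacuous and the earlier restrictive theorem already covers it).

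Finally I would check worker deviations: since firms do not deviate on path, each worker receives $\mu(w)\in O_w^t$ every period, $\mu(w)$ is her best available offer by stability of $\mu$, and accepting it is optimal because a worker's acceptance — although it creates commitment next period — only locks her into $\mu(w)$, which is already her preferred partner among stable matchings and among currently available offers; rejecting or accepting a worse offer is strictly dominated, so no worker gains by deviating. Assembling the three parts gives that $\sigma$ is a stationary equilibrium with outcome $\mu$. The main obstacle I anticipate is pinning down the exact continuation payoff available to a deviating firm during the punishment phase — one must argue carefully that, because all other firms offer their $\mu_F$-partners and workers accept best offers, the market indeed re-stabilizes at $\mu_F$ (invoking the structure of stable matchings, essentially that $(\mu_F(f))_{f}$ offered simultaneously with best-reply acceptance is a fixed point), and that $f$ cannot do better than $u_f(\mu_F(f))$ in that continuation; getting the geometric-series bookkeeping to collapse cleanly to the stated bound $c^f$ is the delicate computational step.
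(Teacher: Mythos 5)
Your proposal follows the paper's proof essentially step for step: the same case analysis of firm deviations (offers to preferred workers rejected by worker commitment or by stability of $\mu$; offers to less preferred workers unprofitable; the ``break $\mu$'' deviation compared against the continuation at $\mu_F$ via the geometric-series computation that yields the threshold $c^f$), followed by the same on-path optimality argument for workers. One conceptual point needs correcting, though it does not damage your algebra: the switch to $\mu_F$ after a deviation is not a punishment but a \emph{temptation}. Since $\mu_F$ is the firm-optimal stable matching, $u_f(\mu_F(f)) \ge u_f(\mu(f))$ --- your inequality ``$u_f(\mu(f)) \ge u_f(\mu_F(f))$'' is backwards, although your subsequent observation that $c^f \le 1$ is consistent with the correct direction --- so the deviation to be deterred is a firm deliberately forgoing one period's payoff in order to trigger the market's move to $\mu_F$; the hypothesis $\delta_f < c^f$ works by making firms too impatient for the perpetual future gain $u_f(\mu_F(f)) - u_f(\mu(f)) \ge 0$ to outweigh the immediate loss $u_f(\mu(f))$. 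Your verbal summary (``the one-period gain \ldots dominated by the discounted loss $u_f(\mu(f)) - u_f(\mu_F(f))$ incurred forever after'') inverts both signs, but the displayed comparison you actually set up, and the conclusion that small $\delta_f$ kills the deviation, are the correct ones. Finally, your worry about whether the continuation truly realizes $\mu_F$ (worker commitment can delay or block the transition) is legitimate and is not addressed by the paper either; the argument survives because, given that all other firms offer their $\mu_F$-partners and $\mu_F$ is stable, $u_f(\mu_F(f))$ serves as an upper bound on the deviator's per-period continuation payoff, which is all the comparison requires.
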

\begin{proof}
Let $\mu$ be a stable matching. We will prove that the strategy $ \sigma $ is a stationary equilibrium that supports $\mu$ when $ \delta_f \in (0, c^{f}) $.

Assume that until period $\widetilde{t}-1>2$, all agents play the strategy profile $\sigma$. Also, assume that there is a firm that deviates at stage $\widetilde{t}$, i.e., $\hat{o}^{\widetilde{t}}_f \neq \mu(f)$. Now, we have three cases to consider:

\begin{description}

\item [\textbf{(i) $\boldsymbol{\hat{o}^{\widetilde{t}}_{f} = w$, $ u_{f}(w) > u_{f}(\mu(f)) $, and $\mu^{\widetilde{t}-1}(w)\neq w}$:}] Since workers hold commitment, and given that other firms in this period are playing $\sigma$ and the outcome matching from the previous period is $\mu$, each firm $\hat{f} \neq f$ makes an offer to $\mu(\hat{f})$, i.e., $o^{\widetilde{t}}_{\hat{f}} = \mu(\hat{f})$. Consequently, $w$ rejects the offer $\hat{o}^{\widetilde{t}}_{f}$.

\item [\textbf{(ii) $\boldsymbol{\hat{o}^{\widetilde{t}}_{f} = w$, $ u_{f}(w) > u_{f}(\mu(f)) $, and $\mu^{\widetilde{t}-1}(w)= w}$:}] Since $\mu$ is stable and agents have been playing $\sigma$ until stage $\widetilde{t}-1$, we have that $0 = u_w(\mu(w)) > u_w(f)$. Then, $w$ rejects the offer $\hat{o}^{\widetilde{t}}_{f}$.

\item [\textbf{(iii) $\boldsymbol{ \hat{o}^{\widetilde{t}}_{f} = f}$, meaning that $\boldsymbol{f}$ deviates by making no offer:}] In period $\widetilde{t}+1$, since the rest of the agents are playing $\sigma$ and the resulting matching in period $\widetilde{t}$ differs from $\mu$, each firm $\hat{f} \neq f$ makes an offer to $\mu_F(\hat{f})$, i.e., $o^{\widetilde{t}+1}_{\hat{f}} = \mu_F(\hat{f})$. Firm $f$ would benefit from deviating by not making an offer, and the best strategy it can follow is $\mu_F(f)$ if
\begin{equation}\label{ecu 1 teorema de la segunda estrategia para f}
    \sum_{t=1}^\infty
    \delta_{f}^{t-1} u_{f}(\mu(f)) <  \sum_{t=1}^{\widetilde{t}-1} \delta_{f}^{t-1} u_{f}(\mu(f)) +\sum_{t={\widetilde{t}+1}}^\infty \delta_{f}^{t-1} u_{f}(\mu_F(f)).
\end{equation}
 By the resolution of the Geometric Series, we have\footnote{For the resolution of Geometric Series see \cite{stewart2012calculus}.}
   $$
   u_{f}(\mu(f))\frac{1}{1-\delta_{f}}  <  u_{f}(\mu(f)) \left( \frac{1-\delta_{f}^{\widetilde{t}-1}}{1-\delta_{f}} \right) +  u_{f}(\mu_F(f))\frac{\delta_{f}^{\widetilde{t}}}{1-\delta_{f}}.
   $$    
  By operating the previous inequality, we obtain 
    $$
   u_{f}(\mu(f)) <  u_{f}(\mu(f)) \left( 1-\delta_{f}^{\widetilde{t}-1} \right) +  u_{f}(\mu_F(f))\delta_{f}^{\widetilde{t}},
   $$
  and thus  
  $$
   u_{f}(\mu(f)) - u_{f}(\mu(f)) + u_{f}(\mu(f))\delta_{f}^{\widetilde{t}-1}  < u_{f}(\mu_F(f))\delta_{f}^{\widetilde{t}}.
   $$
Then, operating again, we obtain 
$$
   \frac{u_{f}(\mu(f))}{u_{f}(\mu_F(f))}  < \delta_{f}.
   $$
That is, firm \( f \) will satisfy \eqref{ecu 1 teorema de la segunda estrategia para f} if their discount factor meets the condition \( \delta_{f} >  \frac{u_{f}(\mu(f))}{u_{f}(\mu_F(f))}  \). Since by hypothesis we have \( \delta_{f} \in (0, c^{f}) \), where \( c^{f } = \left( \frac{u_{f}(\mu(f))}{u_{f }(\mu_F(f ))} \right) \), firm \( f \) does not benefit from deviating from $\sigma$.
\end{description}

Then, from cases (i)--(iii),  firm $f$ does not benefit by deviating in period $\widetilde{t}$ and, therefore, follows the strategy $\sigma$ making offer $o^{\widetilde{t}}_{f} = \mu(f)$.

Now, consider the strategies of the workers. Firms do not deviate, meaning \( o^t_f = \mu(f) \) for each \( f \in F \) and each period $t$. Hence, \( \mu(w) \in O^t_w \) for each \( w \in W \). Since \( \mu \) is stable, the offer that provides the highest utility to each \( w \) is \( \mu(w) \). Therefore, the workers' strategies are optimal, as in the case that workers have no commitment, they accept the offer that maximizes their utility.

Therefore, the strategy $ \sigma $ is a stationary equilibrium, whose outcome is $ \mu $.
 \end{proof} 

\section{Concluding remarks}\label{section concludings}

This paper extends the static matching model by \cite{gale1962college} to a dynamic setting where firms and workers engage repeatedly, examining how different commitment structures and levels of patience among agents impact long-term stability in decentralized markets. We model a non-cooperative dynamic game where firms periodically offer positions, and workers choose to accept or reject these offers, without forming beliefs about others' actions. The study considers three commitment scenarios---both sides, firms-only, and workers-only---and introduces stationary equilibrium as the solution concept. We find that stable matchings can be supported as stationary equilibria under specific commitment settings, providing insights into how varying levels of patience and commitment affect equilibrium and stability outcomes over time.

If we consider a potential extension to a many-to-one model, where firms can hire multiple workers, the game becomes asymmetric regarding which agents make the offers. Due to this asymmetry, our results do not extend directly, even assuming responsive utilities for firms. For example, in cases where firms hold commitment, it is not possible to determine the exact number of stages required for re-stabilization; only a lower bound can be provided \citep[see][for more details]{bonifacio2024counting}. Therefore, considering these market characteristics, an indirect extension may be feasible for many-to-one markets where firms exhibit responsive or even substitutable utility functions.

\end{document}